\newcommand{\argmax}{\operatorname{argmax}}
\newtheorem{thm}{Theorem}
\newtheorem{prop}{Proposition}
\theoremstyle{definition}
\newtheorem{rem}{Remark}
\title{Limit Orders and Knightian Uncertainty\footnote{Examples similar to our motivating example have been studied by Dominique P\"aper and Peter Habiger in their respective master theses written under the supervision of Christoph Kuzmics. We are grateful to both as well as to Sophie Bade, Patrick Beissner, Simone Cerreia-Vioglio, Peter Klibanoff, Nenad Kos, Michael Mandler, Frank Riedel, Klaus Ritzberger, Jan Werner, and Michael Zierhut, as well as to seminar audiences at the University of Bielefeld, Bocconi University, Caltech, UC Irvine, University of Minnesota, Vienna University of Business and Economics, and participants of the VII Hurwicz workshop in Warsaw, the Conference of Mechanism and Institution Design 2020, and the Risk, Uncertainty, and Decision Conference 2020 for valuable comments and suggestions.}}
\author{Michael Greinecker\thanks{University of Graz, Department of Economics, Universit\"atsstrasse 15, 8010 Graz, Austria, phone: +43 316 380 3454, email: \href{mailto:michael.greinecker@uni-graz.at}{michael.greinecker@uni-graz.at}} \and Christoph Kuzmics\thanks{University of Graz, Department of Economics, Universit\"atsstrasse 15, 8010 Graz, Austria, phone: +43 316 380 7111, email: \href{mailto:christoph.kuzmics@uni-graz.at}{christoph.kuzmics@uni-graz.at}}}
\definecolor{dark-red}{rgb}{0.4,0.15,0.15}
\definecolor{dark-blue}{rgb}{0.15,0.15,0.4}
\definecolor{medium-blue}{rgb}{0,0,0.5}
\begin{document}

%\onehalfspacing

\maketitle

\begin{abstract}
A range of empirical puzzles in finance has been explained as a consequence of traders being averse to ambiguity. Ambiguity averse traders can behave in financial portfolio problems in ways that cannot be rationalized as maximizing subjective expected utility. However, this paper shows that when traders have access to limit orders, all investment behavior of an ambiguity-averse decision-maker is observationally equivalent to the behavior of a subjective expected utility maximizer with the same risk preferences; ambiguity aversion has no additional explanatory power.

\bigskip

\noindent {\bfseries Keywords:} Knightian uncertainty, ambiguity aversion, subjective expected utility, asset pricing puzzles, strict dominance

\noindent {\bfseries JEL Classification:} D81, G11, C72
\end{abstract}

\newpage

\section{Introduction}

We show that any portfolio choice that cannot be explained by subjective expected utility maximization can also not be explained by ambiguity aversion when decision-makers have access to limit orders. This is in sharp contrast to a situation where decision-makers trade at given known prices, the usual setting in the literature. In our model, a decision-maker faces uncertainty over the joint distribution of the price of an asset at the point of purchase and its final value. They can trade via limit orders that trade contingent on prices. Our results show that every portfolio choice that cannot be explained as the behavior of a subjective expected utility maximizer for a given Bernoulli utility function for some probabilistic belief is state-wise strictly dominated by some limit order. This implies that any ambiguity averse decision maker whose preferences satisfy the von Neumann Morgenstern axioms over the set of constant acts and a monotonicity axiom can only choose a limit order that could also be chosen by a subjective expected utility maximizer with the same risk attitude.\footnote{Note the important qualification ``with the same risk attitude.'' Under certain assumptions, trading without limit orders in the presence of ambiguity can also be represented by an expected utility trader with, however, a substantially higher degree of risk aversion, see e.g., \citet{maenhout2004robust} and \cite{skiadas2003robust,skiadas2013smooth}.} Ambiguity aversion has no additional explanatory power.

For the standard Bayesian paradigm, all uncertainty can be quantified by a single probability distribution, and a rational decision-maker maximizes their expected utility with respect to this distribution. Under the Bayesian paradigm, both the probability distribution and the utility function to be maximized can be derived from personal preferences, and rational decision-makers are taken to be subjective expected utility maximizers.

There is a variety of empirically observed phenomena in finance that are hard to rationalize with the standard Bayesian paradigm. On the whole, individuals tend to be more conservative in their financial trading (they trade less) than they would if they were subjective expected utility maximizers with the risk preferences that they display in other contexts. Ambiguity aversion preference models in various forms are able to generate more conservative trading behavior at the intensive margin (e.g., the smooth model of \citet{klibanoff05}) or, for some models, even at the extensive margin (e.g., the maxmin expected utility model of \citet{gilboa89}). By admitting more conservative trading behavior, ambiguity aversion has been used to explain a variety of quantitative and qualitative empirical puzzles, such as the financial market participation puzzle, see e.g., \citet{10.2307/2951685}, the home-bias investment puzzle, see e.g., \citet{epstein2003two} and \citet{uppal2003model}, the excess volatility puzzle of \citet{shiller1981stock} and \citet{leroy1981present}, see e.g., \citet{epstein1994intertemporal} and \citet{illeditsch2011ambiguous}, and the equity premium puzzle of \citet{MEHRA1985145}, see e.g., \citet{ju2012ambiguity}, \citet*{doi:10.3982/QE708}, \citet{BRENNER2018503}, and \citet{IZHAKIAN2020105001}. In many cases, subjective expected utility maximization can only explain the size of effects if one assumes agents to be overly risk-averse. For example, the equity premium puzzle concerns the overly high implied risk-aversion of investors that is needed when one attributes the historical difference between the returns of equity and Treasury bonds to an equilibrium risk premium in a parametric general equilibrium model.

The explanations of these puzzles based on ambiguity aversion depend crucially on how investors can trade. We show that when decision-makers can trade with common financial instruments, observable differences between ambiguity averse decision-makers and subjective expected utility maximizers vanish. In all the papers mentioned above, decision-makers only trade at a given known price. If, in contrast, decision-makers can set price contingent orders -- limit orders -- before prices materialize, then for every behavior that can occur in standard models of ambiguity aversion, there exists a Bayesian probabilistic belief at which their choice maximizes subjective expected utility for the same Bernoulli utility function. Ambiguity aversion is not identifiable, and the ambiguity premium must be zero.\footnote{Furthermore, and importantly, the risk premium is unaffected by ambiguity aversion. This is in contrast to results on traders without access to limit orders in which a decision maker's ambiguity aversion can be understood ``as if'' they have a higher degree of risk aversion, but behave otherwise like subjective expected utility maximizers, see e.g., \citet{maenhout2004robust}, \citet{skiadas2003robust} and \citet{skiadas2013smooth}.}

A wide variety of models of ambiguity aversion has been used in finance, and we make our argument robust to the choice of the underlying decision-theoretic model. Our argument applies to all those models of ambiguity aversion formulated in an \citet{anscombe63} style framework, with both uncertainty and calculable risk, that satisfy a weak monotonicity requirement, and that apply expected utility theory to choice problems that only involve calculable risk.\footnote{The most popular such framework is the one introduced by \citet{fishburn1970utility}.} The weak monotonicity requirement is simply that no choice can be dominated by another single deterministic choice in each state of nature. The models of ambiguity aversion we allow for include, among others, the maxmin expected utility model of \citet{gilboa89}, the Choquet expected utility model of \citet{schmeidler89}, the smooth ambiguity model of \citet*{klibanoff05}, the variational and multiplier preference models of \citet*{maccheroni06} and \citet{hansen2001robust}, confidence function preferences of \citet{chateauneuf2009ambiguity}, uncertainty aversion preferences of \citet*{cerreia11}, and the incomplete preference model of \citet{bewley02}.\footnote{Our argument does not apply to some more behaviorally flavored models of ambiguity aversion such as those of \citet{seo09}, \citet{saito15}, and \citet{ke2017randomization} that violate the weak monotonicity condition. Such models have not been used in finance and lack normative appeal.}

To make our argument robust to the choice of the decision-theoretic model, we make use of a dual characterization of subjective expected utility maximization. In all the models listed above, choices that can be explained by the model but not by subjective expected utility maximization must be strictly dominated by randomized choices, but not by deterministic choices.\footnote{Technically, randomization convexifies the available choice set.} We can explain the logic of the last statement with a variant of one of the thought experiments of \citet{ellsberg61}: Consider two urns, each containing a hundred balls. One urn is unambiguous and is known to contain $49$ white balls and $51$ black balls. The other, ambiguous, urn is filled with $100$ balls that are each white or black, but the composition is not known. A decision-maker has to choose between the following bets: In bet one, the decision-maker wins a prize if a ball drawn from the ambiguous urn is black. In bet two, the decision-maker wins the same prize if a ball drawn from the ambiguous urn is white. Finally, in bet three, the decision-maker wins the same prize if a ball drawn from the unambiguous urn is white. A decision-maker choosing bet three must be ambiguity-averse, for their winning chance is only $\nicefrac{49}{100}$, while for any probabilistic belief about the composition of the ambiguous urn, either bet one or two (or both) must have a winning chance of at least $\nicefrac{1}{2}$.

Note that none of the three bets dominates another in pairwise comparisons for every composition of the ambiguous urn. But as \citet{raiffa61} pointed out, a lottery that chooses bets one and two with probability $\nicefrac{1}{2}$ each has a higher winning chance (of $\nicefrac{1}{2}$) than bet three independently of the composition of the ambiguous urn. Only those two bets that can be chosen by a subjective expected utility maximizer for some probabilistic belief over the urn composition are not dominated by a randomized bet.

As \citet{kuzmics2017abraham} pointed out, this is true in general. A lemma of \citet{pearce84} states that in a finite two-player game in normal form, a strategy is not dominated by any mixed strategy if and only if it is a best reply to a mixed strategy of the opponent. Related results can be traced back all the way to the complete class theorem of \citet{wald47annals} in statistical decision theory. The lemma of Pearce, when translated to a decision problem in the setting of \citet{anscombe63} with finitely many states, says that every decision that does not maximize subjective expected utility with respect to some probabilistic belief over the states of nature must be strictly dominated by a randomized choice. The models of ambiguity aversion listed above do not allow for choices that are dominated by deterministic, nonrandomized choices. Therefore, the only choices that can be explained by ambiguity aversion but not by subjective expected utility maximization are those that are strictly dominated by randomized choices but not by nonrandomized choices. We show that in our setting, there are no such choices. Thus, even if the decision maker does not randomize, their choices must be consistent with subjective expected utility for their given risk-preferences. We first illustrate this claim in a simple example that abstracts from various features of our general model, such as general risk aversion or informative prices, but allows for a very simple argument.

For the general model, we need to adapt our arguments slightly since we are working in an infinite-dimensional setting. Spaces of limit orders are infinite-dimensional function spaces, and the state space is allowed to be infinite. Nevertheless, we show in Proposition \ref{wald}, using a generalization of Pearce's lemma due to \citet*{battigalli16n}, that a limit order is not dominated by a randomly chosen limit order if and only if it maximizes subjective expected utility with respect to some probabilistic belief. While Proposition \ref{wald} corresponds to a universal decision theoretic principle, Proposition \ref{det}, our main technical result, makes crucial use of our finance setting. It states that a limit order that is strictly dominated by a randomly chosen limit order, a mixed limit order in our language, must already be strictly dominated by a single (deterministic) limit order. Therefore, as we state in the main Theorem \ref{newthm}, everything that can be explained by ambiguity aversion could be explained by subjective expected utility maximization with the same Bernoulli utility function as well.

For many models of ambiguity aversion, an even stronger conclusion holds than the one given in  Theorem \ref{newthm}. Proposition \ref{beliefs} shows that behavior that can be explained by ambiguity aversion based on a model with a set of probability distributions to model beliefs can be explained as maximizing expected utility with respect to a probabilistic belief in the closed convex hull of the underlying set of probability distributions. Our argument does, therefore, not rely on decision makers having extreme probabilistic beliefs.  Proposition \ref{simp} shows that the limit orders in our arguments can be implemented by realistic market portfolios.
\bigskip

Here is the structure of the rest of this paper: Section \ref{sec:example} provides a simple example that shows the dramatic difference limit orders can make. Section \ref{sec:model} presents the environment we use and our main results. Section \ref{sec:disc} discusses the scope of our results, the modeling choices we make, and how one can generalize some results. Section \ref{sec:proofs} collects all proofs, except for the proof of Theorem \ref{newthm} which is a direct consequence of Propositions \ref{wald} and \ref{det}.

\section{Motivating Example}\label{sec:example}

We illustrate our central point in terms of a very simple portfolio problem. There is one asset whose future value can be either $-1$ or $1$. A risk-neutral decision-maker is faced with a price of $p$ and asked to buy or (short) sell at most one unit of the asset. The amount the decision-maker can buy must, therefore, lie in the interval $[-1,1]$. If the decision-maker maximizes subjective expected utility with respect to some probabilistic belief $\beta$ on the value being $1$, buying one unit is uniquely optimal if $p<\beta-(1-\beta)$, selling one unit is uniquely optimal if $p>\beta-(1-\beta)$, and everything is optimal, including not trading at all, if $p=\beta-(1-\beta)$.

Suppose now that the decision-maker behaves according to the maxmin expected utility model of \citet{gilboa89} and that two beliefs $\beta_l<\beta_h$ form the extreme points of their set of beliefs. Now buying one unit is optimal if $p\leq\beta_l-(1-\beta_l)$, selling one unit is optimal if $p\geq \beta_h-(1-\beta_h)$, and not trading is uniquely optimal when $p$ lies in the interval $\big(\beta_l-(1-\beta_l),\beta_h-(1-\beta_h)\big)$. Ambiguity aversion allows a decision-maker completely to completely refrain from trading for a range of prices. This is the core of the argument of \citet{10.2307/2951685}, who explain observed stock market abstention this way. Not all models of ambiguity aversion give rise to such abstention. Consider a decision-maker who behaves according to the smooth ambiguity aversion model of \citet{klibanoff05}. Such a decision-maker has a prior over the set of probabilistic beliefs, applies an increasing transformation to expected utilities under each probabilistic belief, and then takes the average with respect to the prior. For example, following Example 1 of \citet{klibanoff05}, we can take a risk-neutral decision-maker whose prior assigns probability $1/2$ each to the two beliefs $\beta_l=1/4$ and $\beta_h=3/4$, and whose increasing transformation is given by a function $\phi(x) = (1-e^{-\alpha x})/(1-e^{-\alpha})$ with $\alpha>0$. In the terminology of  \citet{klibanoff05}, the decision-maker has constant ambiguity aversion with $\alpha$ the coefficient of ambiguity aversion. The payoff of the decision-maker when buying $x$ units of the asset at price $p$ is then
\[\nicefrac12 ~\phi\Big(x\big(\nicefrac14~(-1-p) + \nicefrac34~ (1-p) \big)\Big) + \nicefrac12 ~\phi\Big(x\big(\nicefrac34~(-1-p) + \nicefrac14~(1-p)\big)\Big).\]
The demand for the asset can be explicitly calculated and is given by a demand function $d:\mathbb{R}\to [-1,1]$ such that
\[d(p) = \left\{ \begin{array}{ll} 1 & \mbox{ if } p \le -\frac12, \\ \min\left\{\frac{\ln(1/2-p)-\ln(1/2+p)}{\alpha},1\right\} & \mbox{ if } p \in (-\frac12,0], \\ \max\left\{\frac{\ln(1/2-p)-\ln(1/2+p)}{\alpha},-1\right\} & \mbox{ if } p \in (0,\frac12), and \\ -1 & \mbox{ if } p \ge \frac12. \end{array}\right.\]
The decision-maker will generically trade a nonzero amount, but the volume traded will be much smaller than the (here neutral) risk preferences can explain under subjective expected utility. Such a reduction in the demand of the asset traded might be able to explain the equity premium puzzle as a consequence of ambiguity aversion.

So far, we have assumed that the decision-maker knows the price of the asset when trading and only one price becomes relevant.
We now consider a situation in which the decision-maker trades via (generalized) limit orders before prices realize. Formally, limit orders are measurable functions from $\mathbb{R}$ to the interval $[-1,1]$. For a limit order $l$ and a price $p$, we interpret $l(p)$ as the amount of the asset $l$ buys at $p$. We do not differentiate between two limit orders that agree almost surely. A decision-maker trading with limit orders must have a model of how prices relate to final values. For the sake of this example, and only for the sake of this example, we follow \citet{10.2307/2951685} and assume that prices do not tell us anything about the future value of the asset. Prices are distributed according to a distribution $F$ that has full support and no mass points. Finding the optimal choice for a subjective expected utility maximizer is straightforward. Their chosen limit order must maximize the conditional expected utility given the price for almost every price. So they must choose a limit order that buys one unit if $p<\beta-(1-\beta)$ and sells one unit if $p>\beta-(1-\beta)$, for almost every $p$. The limit order simply carries out what the decision-maker would have done if they knew the price of the asset.

The situation for an ambiguity averse decision maker is completely different, both under maxmin expected utility and the smooth model. Neither wants to choose a limit order that mimics their demand. Actually, only limit orders that a subjective expected utility maximizer could also choose can be chosen by such an ambiguity averse decision maker. A subjective expected utility maximizer can only use a cut-off limit order $l_v$ with cut-off $v$ such that $l_v(p)=1$ for $p<v$ and $l_v(p)=-1$ for $p\geq v$. Indeed, only such cut-off limit orders with cut-off $v \in [-1,1]$ are undominated. To see this, let $l:\mathbb{R}\to [-1,1]$ be a limit order such that $-1<l(p)<1$ for a set of prices $p$ with positive measure or such that $l$ is not almost everywhere equal to a nondecreasing function. We can assume without loss of generality that $l(p)=1$ for $p\leq-1$ and $l(p)=-1$ for $p\geq1$; limit orders violating this condition are trivially dominated. Now, there must then exist some value $v$ in the interval $[-1,1]$ such that

\[\int_{-1}^v \left(1-l(p)\right)~\mathrm d F=\int_{v}^1 \left(1+l(p)\right)~\mathrm d F.\]
Indeed, if we let $L(v)$ and $R(v)$ denote the left and the right-hand side, respectively, then $\lim_{v\downarrow -1}L(v)=0$, $\lim_{v\uparrow 1}L(v)>0$, $\lim_{v\downarrow -1}R(v)>0$, and $\lim_{v\uparrow 1} R(v)=0$. Moreover, $L$ and $R$ are continuous functions since $F$ has no mass points. The existence of a $v$ such that $L(v)=R(v)$ follows, therefore, from the intermediate value theorem; see the picture below.\bigskip

\begin{tikzpicture}
\draw[pattern=north west lines, pattern color=gray	, draw=white] (-2,2) rectangle (-0.4,0.26);
\draw[pattern=north west lines, pattern color=gray	, draw=white] (2,-2) rectangle (-0.4,0.26);

\draw  plot[smooth, tension=.7] coordinates {(-2,2) (-1.86,1.98) (-1.73,1.94) (-1.59,1.86) (-1.5,1.8) (-1.4,1.68) (-1.3,1.53) (-1.18,1.33) (-1.1,1.17) (-1,1) (-0.9,0.84) (-0.81,0.72) (-0.66,0.55) (-0.54,0.41) (-0.4,0.26) (-0.22,0.04) (-0.1,-0.18) (-0.03,-0.38) (0.02,-0.59) (0.08,-0.79) (0.15,-0.98) (0.3,-1.23) (0.44,-1.39) (0.68,-1.6) (0.82,-1.69) (1.05,-1.81) (1.26,-1.89) (1.44,-1.94) (1.65,-1.98) (2,-2)};

\filldraw[white, fill=white]  plot[smooth, tension=.7] coordinates {(-2,2) (-1.86,1.98) (-1.73,1.94) (-1.59,1.86) (-1.5,1.8) (-1.4,1.68) (-1.3,1.53) (-1.18,1.33) (-1.1,1.17) (-1,1) (-0.9,0.84) (-0.81,0.72) (-0.66,0.55) (-0.54,0.41) (-0.4,0.26) (-0.5,0.26)(-0.7,0.26)(-2,0.26)};

\filldraw[white, draw=none]  plot[smooth, tension=.7] coordinates {(-0.4,0.26) (-0.22,0.04) (-0.1,-0.18) (-0.03,-0.38) (0.02,-0.59) (0.08,-0.79) (0.15,-0.98) (0.3,-1.23) (0.44,-1.39) (0.68,-1.6) (0.82,-1.69) (1.05,-1.81) (1.26,-1.89) (1.44,-1.94) (1.65,-1.98) (2,-2) (2,-1.9) (2,-1.8) (2,-1.7) (2,0.26)};

\draw[-] (-4,2)node[left]{$1$} -- (4,2) node{};
\draw[-, very thick] (-4,2)node{} -- (-2,2) node{};
\draw[-, dashed] (-4,0)node[left]{$0$} -- (4,0) node{};
\draw[-] (-4,-2)node[left]{$-1$} -- (4,-2) node{};
\draw[-, very thick] (2,-2)node{} -- (4,-2) node{};
\draw[-, thin] (-0.7,1.5)node{} -- (-0.1,1.5) node[right]{$L(v)$};
\draw[-, thin] (-0.1,-1.5) -- (-0.7,-1.5) node[left]{$R(v)$};

\draw[fill] (0.4,-1.354) circle (.17ex) node[above right]{$l(p)$};
\draw[fill] (0.4,0) circle (.17ex) node[below]{$p$};

\draw[-] (-0.4,2)node{} -- (-0.4,-2) node[below]{v};
\draw[very thick]  plot[smooth, tension=.7] coordinates {(-2,2) (-1.86,1.98) (-1.73,1.94) (-1.59,1.86) (-1.5,1.8) (-1.4,1.68) (-1.3,1.53) (-1.18,1.33) (-1.1,1.17) (-1,1) (-0.9,0.84) (-0.81,0.72) (-0.66,0.55) (-0.54,0.41) (-0.4,0.26) (-0.22,0.04) (-0.1,-0.18) (-0.03,-0.38) (0.02,-0.59) (0.08,-0.79) (0.15,-0.98) (0.3,-1.23) (0.44,-1.39) (0.68,-1.6) (0.82,-1.69) (1.05,-1.81) (1.26,-1.89) (1.44,-1.94) (1.65,-1.98) (2,-2)};
\end{tikzpicture}\bigskip

We claim that the cut-off limit order $l_v$ strictly dominates the limit order $l$; it gives a higher expected return in both states. Actually, it gives the same additional positive expected return independently of the final value. To see this, note that $l_v$ buys an additional expected amount of $L(v)$ and sells the same additional expected amount. But this additional expected amount is bought when prices are relatively low (at most $v$) and sold when prices are relatively high. This leads mechanically to an additional positive expected return that does not depend on the final value. So, the only limit orders that are not strictly dominated are those that a subjective expected utility maximizer that has the same risk-preferences might also choose. Consequently, a decision maker that behaves according to the maxmin expected utility model, the smooth ambiguity model, or any similar model will be observationally indistinguishable from a subjective expected utility maximizer---even with the same risk preferences.\bigskip

It is tempting to explain the difference limit orders make for ambiguity averse decision makers by peculiarities of the example: Prices are uninformative; there is only one asset; the decision-maker faces fundamental uncertainty about final values, but somehow not about prices arising in-between; the decision-maker is risk-neutral. None of that matters. As we show in the next section in a setting in which the decision-maker faces ambiguity about the joint distribution of prices and final values, every choice of a limit order that cannot be rationalized as maximizing expected utility with respect to some probabilistic belief must be strictly dominated by a single, deterministic limit order. None of the standard theories of ambiguity aversion allows for choices that are dominated by deterministic choices, so they are unable to explain anything that cannot already be explained by subjective expected utility maximization. The price of the increased generality is that our proof is not constructive. For notational ease we restrict our model to a single asset. None of our arguments depend on this restriction. Of course, real-life portfolio decisions require joint optimization over all available assets, even in settings of pure risk.\bigskip

An alternative interpretation of the result by \citet{10.2307/2951685} is that ambiguity averse decision-makers use the maxmin criterion not ex-ante before the price of the asset realizes, but at an interim stage at which prices are already known but final values are not. Such a decision-maker considers a range of distributions over final values possible conditional at each price and assumes the worst such distribution for each price. Our model in Section \ref{sec:model} cannot fully accommodate such a situation; our restriction on the joint distribution of prices and final values admitting well-behaved densities limits the extent to which the set of priors can vary after conditioning on the price. However, even in a model in which we allow the decision-maker to behave like a maxmin expected utility maximizer conditional on the price, all behavior can still be rationalized as the behavior of a subjective expected utility maximizer, and the rationalizing  belief can be constructed manually. The machinery based on our technical assumptions is not needed. We show this in an appendix at the end. In the model that we utilize there, we fix the marginal distribution of prices as in the motivating example but allow among the states all deterministic functions relating prices and final values. An ambiguity averse decision maker could refrain from trading at all prices between $-1$ and $1$. Importantly, however, a subjective expected utility maximizer can do so too. The belief that would support this lack of trading is the belief that the conditional expected final value at each price equals the price -- the decision-maker believes markets to be informationally efficient.

\section{Environment and Main Result} \label{sec:model}

The decision-maker faces uncertainty over which probabilistic model best describes the relationship between the price of an asset and its final value. We think of these as possible objectively correct models; all residual uncertainty conditional on the true model is objective, quantifiable risk. We follow \citet{10.2307/1882087} here, who considered only those matters fundamentally uncertain for which ``there is no scientific basis on which to form any calculable probability whatever.''  Uncertainty within scientific models is taken to be objective uncertainty.

There is a compact metrizable space $Y$ of models and for each model a joint density over prices and final values of the single asset with respect to some product measure $\pi\otimes\xi$ obtained from $\sigma$-finite Borel measures $\pi$ and $\xi$ on $\mathbb{R}$.  The family of densities can be represented by a single nonnegative measurable function $h:\mathbb{R}\times\mathbb{R}\times Y\to\mathbb{R}$ continuous in $Y$ such that
\[\int\int h(p,x,y)~\mathrm d \pi(p)~\mathrm d\xi(x)=1\] for all $y\in Y$.

The decision-maker's ultimate payoff from any net-gains from investing is given by an increasing continuous Bernoulli utility function $u:\mathbb{R}\to\mathbb{R}$.

\begin{rem}
We assume that the Bernoulli utility function is defined on the whole real line. This rules out certain Bernoulli utility functions such as logarithmic ones, and requires the Bernoulli utility function to be unbounded if it should be increasing and weakly concave. One could easily change the framework to have Bernoulli utility functions on domains that are bounded below. Later, the decision-maker is allowed to short-sell and, therefore, able to make large losses. If we want to have Bernoulli utility functions on a restricted domain, we would simply need an assumption that guarantees that final wealth levels stay in the domain of the Bernoulli utility function for all losses that can occur.
\end{rem}

The net amount the decision-maker is allowed to invest is restricted to lie in an interval $[b,t]$ with $b<0<t$. The bounds represent short-selling and debt constraints, respectively.
To guarantee that expected utilities are defined and finite, we assume that there is a $\pi\otimes\xi$-integrable function $d:\mathbb{R}\times\mathbb{R}\to\mathbb{R}$
satisfying
\[|u(tx-tp)|h(p,x,y)+|u(bx-bp)|h(p,x,y)\leq d(p,x)\]
for all $(p,x)\in\mathbb{R}^2$ and all $y\in Y$.

\begin{rem}This is a joint assumption on the Bernoulli utility function and the possible joint distributions over prices and payoffs. For bounded $u$, it holds automatically. If $u$ is increasing and weakly concave, then the less concave the Bernoulli utility function is, the more restrictive is the assumption on the joint distribution over prices and payoffs. For the extreme case of a risk-neutral agent, it amounts to a uniform integrability condition on net-gains. Even then, the condition is fairly weak and mainly requires that the tails of all distributions vanish sufficiently fast in a uniform way. For example, $h$ could be the density of a bivariate normal distribution or bivariate t-distribution with degrees of freedom $2+\epsilon$ (with $\epsilon>0$ arbitrarily small), and $Y$ a compact set of pairs of means and invertible covariance matrices.\footnote{These examples include, therefore, bivariate distributions with existing means and variances. Such distributions need not have a kurtosis, and, thus, include bivariate distributions with heavy tails.} A possible choice of the bounding function $d$ for all these cases would be a scaled-up joint density of two independent t-distributions with degrees of freedom strictly between $1$ and $1+\epsilon$.
\end{rem}

We assume that the decision-maker acts before prices are known and chooses a limit order. A \emph{limit order} is a measurable function from $\mathbb{R}$ into the set $A=[b,t]$. More precisely, limit orders are taken to be equivalence classes of measurable functions from $\mathbb{R}$ to $A$ with two such functions being equivalent if they agree outside a set of $\pi$-measure zero. We denote the set of limit orders by $L$ and endow it with the topology of convergence in measure\footnote{Essentially, we use the topology of convergence in probability for any probability measure mutually absolutely continuous with respect to $\pi$. The resulting topology does not depend on the specific choice of the probability measure and is separable and completely metrizable.} and its corresponding Borel $\sigma$-algebra. We embed $L$ in the space of \emph{mixed limit orders} $\Delta(L)$, the space of Borel probability measures on $L$, by identifying the deterministic limit order $l$ with the Dirac point mass $\delta_l$ concentrated on $l$. A \emph{deterministic limit} order is a mixed limit order of the form $\delta_l$.

\begin{rem}
Our arguments make use of mixed limit orders, but, as we see below, one can interpret them as purely ancillary mathematical constructs used in proofs. The main point is that whether a limit order is undominated or not does not depend on the availability of mixed limit orders. Our main result, Theorem \ref{newthm}, does not refer to any mixed limit orders.
\end{rem}

The expected payoff from $\mu\in\Delta(L)$ if the true model is $y$ is
\[V(\mu,y)=\int\int\int u\big(l(p)(x-p)\big) h(x,p,y)~\mathrm d \pi(p)~\mathrm d \xi(x)~\mathrm d\mu(l).\]

We say that $\mu'\in\Delta(L)$ \emph{strictly dominates} $\mu\in\Delta(L)$ if $V(\mu',y)>V(\mu,y)$ for all $y\in Y$. A mixed limit order that is not strictly dominated by another mixed limit order is \emph{undominated}.  A mixed limit order that is not strictly dominated by a nonrandomized limit order is \emph{deterministically undominated}. The following proposition gives a version of the familiar characterization of undominated choices from statistical decision theory and game theory.

\begin{prop}\label{wald}A mixed limit order $\mu$ is undominated if and only if there exists a probability distribution $\beta\in\Delta(Y)$ such that
$\mu$ maximizes $\int V(\cdot,y)~\mathrm d\beta(y)$.
\end{prop}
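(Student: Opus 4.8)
The plan is to derive Proposition \ref{wald} as an instance of a general minimax/complete-class result, specifically the generalization of Pearce's lemma due to \citet{battigalli16n}, applied to the two-player zero-sum game in which the decision-maker chooses a mixed limit order $\mu\in\Delta(L)$ and ``Nature'' chooses a belief $\beta\in\Delta(Y)$, with payoff $\int V(\mu,y)\,\mathrm d\beta(y)$. The ``if'' direction is the easy half and I would dispatch it first: if $\mu$ maximizes $\beta\mapsto\int V(\cdot,y)\,\mathrm d\beta(y)$ and some $\mu'$ strictly dominated $\mu$, then $V(\mu',y)>V(\mu,y)$ for every $y$, hence $\int V(\mu',y)\,\mathrm d\beta(y)>\int V(\mu,y)\,\mathrm d\beta(y)$ (the strict inequality survives integration because $Y$ is compact, $V(\mu,\cdot)$ and $V(\mu',\cdot)$ are continuous, so the gap is bounded below by a positive constant), contradicting optimality of $\mu$.

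For the ``only if'' direction, the substantive work is to check that the hypotheses of the cited generalized Pearce lemma are met. The key properties I would verify are: (i) $V(\mu,y)$ is well-defined and finite for every $\mu$ and $y$ — this is exactly what the integrable dominating function $d$ in the standing assumptions guarantees, via Fubini and the bound $|u(l(p)(x-p))|h(p,x,y)\le d(p,x)$ which holds pointwise since $l(p)\in[b,t]$ and $u$ is monotone (so $|u(l(p)(x-p))|\le\max\{|u(t(x-p))|,|u(b(x-p))|\}$); (ii) $V$ is affine (indeed linear) in $\mu$ and affine in $\beta$ once we extend to $\int V(\cdot,y)\,\mathrm d\beta(y)$; (iii) the relevant continuity/compactness: $V(\cdot,y)$ is continuous on $\Delta(L)$ in the weak topology for each fixed $y$ — because $l\mapsto \int\int u(l(p)(x-p))h(x,p,y)\,\mathrm d\pi\,\mathrm d\xi$ is continuous in the topology of convergence in measure (convergence in measure of $l_n\to l$ gives convergence $\pi\otimes\xi$-a.e. along a subsequence of the integrands, and the dominating function $d$ lets one apply dominated convergence, with a subsequence argument to upgrade to full convergence), and $V(\mu,\cdot)$ is continuous on the compact metrizable $Y$ since $h$ is continuous in $y$ and $d$ dominates uniformly in $y$. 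I would also note that $\Delta(L)$ is a convex subset of the space of Borel probability measures on the separable completely metrizable space $L$, and $\Delta(Y)$ is weak-$*$ compact since $Y$ is compact metrizable.

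With these verifications in hand, the generalized Pearce lemma of \citet{battigalli16n} yields exactly the dichotomy: a strategy $\mu$ of the decision-maker is either strictly dominated by some $\mu'\in\Delta(L)$, or it is a best response to some mixed strategy $\beta\in\Delta(Y)$ of the opponent, i.e.\ maximizes $\int V(\cdot,y)\,\mathrm d\beta(y)$. ``Undominated'' is the negation of the first alternative, so it is equivalent to the second, which is the claim. I would close by remarking that, since $V(\cdot,y)$ is linear in $\mu$ and each deterministic limit order is an extreme point contribution, the maximum of $\mu\mapsto\int V(\mu,y)\,\mathrm d\beta(y)$ over $\Delta(L)$ is attained (whenever it is attained at all) at a deterministic limit order as well — though this observation is not needed for the statement itself and really belongs to the discussion around Proposition \ref{det}.

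The main obstacle I anticipate is not the game-theoretic abstraction but the functional-analytic bookkeeping required by property (iii): establishing that $V(\cdot,y)$ is genuinely continuous on $\Delta(L)$ — equivalently, that the integrand functional on $L$ is continuous for convergence in measure — and that the hypotheses of \citet{battigalli16n}'s theorem (which is stated at a certain level of generality regarding the topologies and measurability of the strategy spaces) are literally satisfied by $\Delta(L)$ with the weak topology and $\Delta(Y)$ with the weak-$*$ topology. The dominated-convergence-plus-subsequence argument for continuity in measure is standard but needs the uniform envelope $d$ to be invoked carefully, and one must make sure the payoff $V$ is jointly measurable so that the integral $\int V(\mu,y)\,\mathrm d\beta(y)$ makes sense — joint measurability follows from continuity in each argument separately together with separability, but it is worth stating explicitly.
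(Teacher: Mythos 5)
Your overall strategy---reduce the claim to the generalized Pearce lemma of \citet*{battigalli16n}---is the same as the paper's, and your treatment of the ``if'' direction (a uniform positive gap on the compact $Y$ by continuity of $V(\mu,\cdot)$ and $V(\mu',\cdot)$) is fine. But there is a genuine gap in the ``only if'' direction: you apply the lemma directly to $\Delta(L)$, and the hypotheses you propose to verify omit the one that actually fails. The space $L$ of limit orders is \emph{not} compact in the topology of convergence in measure (the paper says so explicitly in its discussion section), so $\Delta(L)$ with the weak topology is not compact either, while the version of the lemma the paper invokes is applied to a compact convex strategy set with a \emph{jointly} continuous, bilinear payoff. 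Your item (iii) only delivers separate continuity in each argument, and joint measurability is not a substitute; the paper stresses that joint continuity is exactly what the product-density assumption on $h$ is there to secure. You correctly flag ``whether the hypotheses are literally satisfied by $\Delta(L)$'' as the main anticipated obstacle, but resolving that obstacle is the substantive content of the paper's proof, and your proposal does not supply the resolution.

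The missing idea is the passage to distributional strategies: the paper defines the surjection $\phi:\Delta(L)\to\Delta_\pi(\mathbb{R}\times A)$ sending a mixed limit order to the induced joint distribution on price--action pairs, observes that $V(\mu,y)$ depends on $\mu$ only through $\phi_\mu$ (so dominance and optimality can be studied on the image), and endows $\Delta_\pi(\mathbb{R}\times A)$ with the narrow topology on Young measures, under which it is a compact convex subset of a locally convex Hausdorff space and the extension $V^*$ to $\Delta_\pi(\mathbb{R}\times A)\times\Delta(Y)$ is jointly continuous by the results of Milgrom and Weber. Only then does the generalized Pearce lemma apply. If you wanted to avoid this machinery you would have to replace the citation by a minimax argument that needs compactness only on Nature's side (for instance, Sion's theorem applied to $g(\mu',\beta)=\int \bigl(V(\mu',y)-V(\mu,y)\bigr)~\mathrm d\beta(y)$ on $\Delta(L)\times\Delta(Y)$), together with a genuine proof that each $V(\cdot,y)$ is weakly continuous on $\Delta(L)$; that would be a different argument from the one you outline, and as written your proof is not complete.
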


\noindent We are now ready for our central technical result.\footnote{The idea to base the argument on the concavity of $u$ was anonymously suggested to us.}

\begin{prop}\label{det}Assume that $\pi$ is atomless or $u$ is weakly concave. Then a mixed limit order is undominated if and only if it is deterministically undominated.
\end{prop}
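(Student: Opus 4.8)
\emph{Setup and reduction.} The implication ``deterministically undominated $\Rightarrow$ undominated'' is trivial, since a deterministic limit order is a mixed limit order; so the plan is to prove the contrapositive of the converse: if $\mu\in\Delta(L)$ is strictly dominated by some $\nu\in\Delta(L)$, then $\mu$ is already strictly dominated by some \emph{deterministic} $\delta_{l^\ast}$. It is convenient to write $G(p,a,y):=\int u\big(a(x-p)\big)\,h(x,p,y)\,\mathrm d\xi(x)$, so that $V(\delta_l,y)=\int G\big(p,l(p),y\big)\,\mathrm d\pi(p)$; since $u$ is increasing, $|u(a(x-p))|\le|u(t(x-p))|+|u(b(x-p))|$ for every $a\in A$, so the integrability assumption yields the uniform bound $|u(a(x-p))|\,h(x,p,y)\le d(p,x)$ for all $a\in A$ and $y\in Y$. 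The two hypotheses call for two different arguments.

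\emph{Weakly concave $u$.} For each fixed $(p,x)$ the map $a\mapsto u\big(a(x-p)\big)$ is concave on $A$ (a concave function precomposed with an affine one), hence $a\mapsto G(p,a,y)$ is concave on $A$ and $f\mapsto\int G\big(p,f(p),y\big)\,\mathrm d\pi(p)$ is a concave functional of $f$. Fixing a probability measure $\mathbb{Q}$ equivalent to $\pi$ and identifying limit orders with elements of the closed convex set $C=\{f\in L^1(\mathbb{Q}):b\le f\le t\}$, that functional is $\|\cdot\|_1$-continuous on $C$ by the uniform bound and dominated convergence. I would then take $\bar l_\nu:=\int_L l\,\mathrm d\nu(l)$, the Bochner average (describable pointwise as $p\mapsto\int l(p)\,\mathrm d\nu(l)$), which lies in $C$ because $A$ is convex and so is again a limit order, and invoke Jensen's inequality to obtain $V(\delta_{\bar l_\nu},y)\ge\int V(\delta_l,y)\,\mathrm d\nu(l)=V(\nu,y)>V(\mu,y)$ for all $y$. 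Thus $\delta_{\bar l_\nu}$ strictly dominates $\mu$.

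\emph{Atomless $\pi$.} Without concavity, averaging fails, and I would instead purify $\nu$ by a Lyapunov/Dvoretzky--Wald--Wolfowitz argument; the obstacle is that a single limit order must beat $V(\mu,\cdot)$ simultaneously at \emph{every} $y$ in a possibly infinite $Y$. The uniform bound and dominated convergence show that $\{y\mapsto V(\delta_l,y)\}_{l\in L}$ is equicontinuous on $Y$, with common modulus $\Delta(y,y'):=\int\int\big(|u(t(x-p))|+|u(b(x-p))|\big)\,|h(x,p,y)-h(x,p,y')|\,\mathrm d\pi(p)\,\mathrm d\xi(x)$; in particular each $y\mapsto V(\rho,y)$ is continuous, so compactness of $Y$ makes $\varepsilon:=\min_{y\in Y}\big(V(\nu,y)-V(\mu,y)\big)$ strictly positive. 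Using equicontinuity and compactness I would pick $y_1,\dots,y_k$ so that every $y$ has some $j$ with $\Delta(y,y_j)<\varepsilon/8$; then, since finitely supported measures are weak$^\ast$-dense in $\Delta(L)$ and each $\rho\mapsto V(\rho,y_j)$ is weak$^\ast$-continuous (as $l\mapsto V(\delta_l,y_j)$ is bounded and continuous on $L$), replace $\nu$ by a finitely supported $\nu'=\sum_{i=1}^m\lambda_i\delta_{l_i}$ with $|V(\nu',y_j)-V(\nu,y_j)|<\varepsilon/8$ for all $j$. Viewing ``at price $p$, follow index $i$ with probability $\lambda_i$'' as a randomized decision rule on $(\mathbb{R},\mathbb{Q})$ with $\mathbb{Q}\sim\pi$ atomless, and applying the Dvoretzky--Wald--Wolfowitz purification theorem (equivalently, Lyapunov convexity) to the finitely many $\mathbb{Q}$-integrable functions $(p,i)\mapsto G\big(p,l_i(p),y_j\big)\,\tfrac{\mathrm d\pi}{\mathrm d\mathbb{Q}}(p)$, I obtain a measurable $i^\ast:\mathbb{R}\to\{1,\dots,m\}$ with $V(\delta_{l^\ast},y_j)=V(\nu',y_j)$ for all $j$, where $l^\ast(p):=l_{i^\ast(p)}(p)$ is again a limit order. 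Then for any $y$, with its associated $y_j$,
\[
V(\delta_{l^\ast},y)\ \ge\ V(\delta_{l^\ast},y_j)-\tfrac{\varepsilon}{8}\ =\ V(\nu',y_j)-\tfrac{\varepsilon}{8}\ \ge\ V(\nu,y_j)-\tfrac{\varepsilon}{4}\ \ge\ V(\nu,y)-\tfrac{3\varepsilon}{8}\ \ge\ V(\mu,y)+\tfrac{5\varepsilon}{8}\ >\ V(\mu,y),
\]
so $\delta_{l^\ast}$ strictly dominates $\mu$.

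\emph{Where the difficulty lies.} The concave case is essentially a one-line Jensen argument once the bookkeeping about $L^1$-continuity is in place. The real work is the atomless case, and precisely the tension that purification theorems control only finitely many linear constraints whereas domination is a constraint at every $y$: the decisive point is the reduction to a finite set $y_1,\dots,y_k$, powered by the uniform integrable majorant $d$ (which yields equicontinuity in $y$ that is uniform over \emph{all} limit orders) together with compactness of $Y$. The remaining ingredients---passage to an equivalent probability measure, the integrable (rather than bounded) version of purification, and measurability of the selection---are routine.
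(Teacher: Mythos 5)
Your proof is correct, and the two halves relate to the paper's proof differently. The weakly concave case is essentially the paper's own argument: the paper disintegrates the Young measure associated with $\nu$ into a kernel $\kappa:\mathbb{R}\to\Delta(A)$ and replaces it by its barycenter $\kappa'(p)=\int a\,\mathrm d\kappa_p(a)$, then applies Jensen; your Bochner average $\bar l_\nu$ is exactly that barycenter, so the only difference is packaging. (One small caution: justifying the pointwise identity $\bar l_\nu(p)=\int l(p)\,\mathrm d\nu(l)$ and the pointwise Jensen step runs into the Aumann-style measurability issue the paper flags in its discussion; the paper's route through regular conditional probabilities of the induced measure on $\mathbb{R}\times A$ sidesteps this cleanly, whereas your $L^1$-Bochner formulation needs a sentence confirming that $L\hookrightarrow L^1(\mathbb{Q})$ is Borel on the bounded set $C$ so the integral exists --- which it is, since convergence in measure and $L^1$-convergence coincide there.) The atomless case is where you genuinely diverge. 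The paper sets up $\Delta_\pi(\mathbb{R}\times A)$ with the narrow topology on Young measures, gets continuity of $\kappa\mapsto\min_y\bigl(V(\kappa,y)-V(\mu,y)\bigr)$ from Berge's maximum theorem, and then invokes in one line the classical denseness of deterministic controls (Warga) to extract a dominating Dirac measure. You instead run a quantitative, self-contained argument: the integrable majorant $d$ gives equicontinuity in $y$ uniformly over all limit orders, compactness of $Y$ reduces the infinitely many domination constraints to finitely many states $y_1,\dots,y_k$ up to $\varepsilon/8$, and Dvoretzky--Wald--Wolfowitz purification over an atomless base measure handles the finitely many remaining linear constraints; your $\varepsilon$-bookkeeping checks out. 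What the paper's approach buys is economy --- the Young-measure apparatus is already needed for Proposition 1 (the Battigalli et al.\ lemma), so the denseness theorem comes for free, and the same compact convex structure is reused in Proposition 4 and in the existence discussion. What your approach buys is transparency: it isolates exactly which assumptions do the work (the uniform majorant $d$ for equicontinuity, compactness of $Y$ for the finite reduction, atomlessness for purification) and replaces an appeal to control-theoretic denseness by the more elementary Lyapunov-type theorem, at the cost of not integrating with the machinery the rest of the paper relies on.
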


Here is the basic idea: Suppose a mixed limit order $\mu$ is dominated by a mixed limit order $\mu'$. Instead of choosing a deterministic limit order at random, as $\mu'$ does, one could imagine randomizing conditionally on the price by ``behavioral limit orders.'' That this makes no difference follows from results in \citet{balder1981mathematical} or \citet{MR658713}. They are versions of the result of \citet{MR0054924} that mixed and behavioral strategies induce the same distributions over plays in extensive form games of perfect recall, or, even closer, the result of \citet{MR42657} on the equivalence of these two forms of randomizing in statistical decision theory. There is a natural topology on ``behavioral limit orders'' under which payoffs are continuous and in which deterministic limit orders are dense if $\pi$ is atomless. This corresponds to the denseness of controls in \citet{MR0372708} or the denseness of pure strategies in \citet{MR812820}. By this denseness, one can find a deterministic limit order that still dominates $\mu$. If $\pi$ is not atomless but $u$ concave, one could replace a behavioral limit order by the limit order that trades the expected amount of the behavioral limit order at each price. For risk averse decision-makers, this is an improvement for every distribution on prices and final values.
 \bigskip

In the remainder of this section we express the key consequence of Propositions \ref{wald} and \ref{det} in purely decision theoretic terms. The state space is the set $Y$ and the set of outcomes, the possible financial gains, is the set $\mathbb{R}$. Each limit order $l$ induces a unique Anscombe-Aumann act $f_l:Y \to \Delta(\mathbb{R})$ given by
\[\int g(r)~\mathrm df_l(y)(r)=\int\int g\big(l(p)(x-p)\big) h(p,x,y)~\mathrm d \pi(p)~\mathrm d \xi(x), \]
for each bounded measurable function $g:\mathbb{R}\to\mathbb{R}$. That this defines $f_l(y)$ follows from \citet[Theorem 4.5.2.]{MR1932358}.

Let $\mathcal{L}$ be the set of all acts that are induced by some $l \in L$. We call a set of (measurable) acts $\mathcal{F}$ \emph{rich} if it includes $\mathcal{L}$ as well as all constant acts with values of the form $f_l(y)$ for some $l \in L$ and $y \in Y$.

Let $\succeq$ denote a (not necessarily complete or transitive) preference relation on $\mathcal{F}$. We say that $\succeq$ is \emph{compatible} with the Bernoulli utility function $u$ if the restriction of $\succeq$ to the set of constant acts in $\mathcal{F}$ can be represented by the expectation of $u$ with respect to (the values of) these constant acts. That $\succeq$ is compatible with a Bernoulli utility function essentially amounts to $\mathcal{F}$ being large enough to include the convex hull of its constant acts, and for the von Neumann Morgenstern axioms to hold on the set of constant acts in $\mathcal{F}$.\footnote{We also need an additional continuity assumption for $u$ to be continuous and integrable for all relevant distributions. Since we allow for unbounded Bernoulli utility functions, the appropriate continuity notion must relate to the allowed probability distributions; see \citet{MR3212197} for an elegant approach.}

For any act $f \in \mathcal{F}$ and any state $y \in Y$ let $f^{y}$ denote the constant act that satisfies $f^{y}(y')=f(y)$ for all $y' \in Y$. We say that the preference relation $\succeq$ (with strict part $\succ$) is \emph{monotone} if $f \succ g$ whenever $f^{y} \succ g^{y}$ for every state $y \in Y$.

With this in place we can state our main result.

\begin{thm}\label{newthm} Assume that $\pi$ is atomless or $u$ is weakly concave.
Let $\succeq$ be a monotone (not necessarily complete or transitive) preference relation on a rich set of acts $\mathcal{F}$ that is compatible with the Bernoulli utility function $u$, and let $l$ be a $\succeq$-maximal element in the set of limit orders. Then there exists a probability distribution $\beta\in\Delta(Y)$ such that $l$ maximizes $\int \int u(m)~\mathrm d f_l(y)(m)~\mathrm d\beta(y)$.
\end{thm}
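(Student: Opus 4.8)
The plan is to show that the decision-theoretic hypotheses on $\succeq$ force the deterministic limit order $\delta_l$ to be \emph{deterministically undominated}, and then to feed this into Propositions \ref{det} and \ref{wald}; there is essentially nothing else to do, since those two propositions carry the entire technical load.

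First I would check that $\delta_l$ is not strictly dominated by any nonrandomized limit order. Suppose, towards a contradiction, that some $l'\in L$ satisfies $V(\delta_{l'},y)>V(\delta_l,y)$ for every $y\in Y$. For any $l''\in L$ and $y\in Y$ one has $V(\delta_{l''},y)=\int u(m)\,\mathrm d f_{l''}(y)(m)$: this is the defining equation of $f_{l''}(y)$ for bounded integrands, and it extends to the possibly unbounded function $u$ because the bounding function $d$ makes every integral in sight absolutely convergent. Since $\mathcal{F}$ is rich, the constant acts $f_{l'}^{y}$ and $f_{l}^{y}$ belong to $\mathcal{F}$ for every $y$; and since $\succeq$ is compatible with $u$, the relation $f_{l'}^{y}\succ f_{l}^{y}$ holds iff $\int u\,\mathrm d f_{l'}(y)>\int u\,\mathrm d f_{l}(y)$, i.e.\ iff $V(\delta_{l'},y)>V(\delta_l,y)$. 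Hence $f_{l'}^{y}\succ f_{l}^{y}$ for every $y\in Y$, so monotonicity of $\succeq$ yields $f_{l'}\succ f_{l}$; as $f_{l'}\in\mathcal{L}\subseteq\mathcal{F}$, this contradicts the $\succeq$-maximality of $l$ among limit orders. Note that only the strict part $\succ$ and the representation of $\succeq$ on constant acts are used, so incompleteness and intransitivity of $\succeq$ cause no difficulty.

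Second, because $\pi$ is atomless or $u$ is weakly concave, Proposition \ref{det} applies, and the deterministically undominated $\delta_l$ is in fact undominated. Proposition \ref{wald} then produces a belief $\beta\in\Delta(Y)$ under which $\delta_l$ maximizes $\mu\mapsto\int V(\mu,y)\,\mathrm d\beta(y)$ over $\Delta(L)$, hence a fortiori over the deterministic limit orders. Unwinding the identity $\int V(\delta_{l''},y)\,\mathrm d\beta(y)=\int\int u(m)\,\mathrm d f_{l''}(y)(m)\,\mathrm d\beta(y)$ shows that $l$ maximizes $l''\mapsto\int\int u(m)\,\mathrm d f_{l''}(y)(m)\,\mathrm d\beta(y)$ over limit orders, which is the assertion.

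With Propositions \ref{wald} and \ref{det} available, no serious obstacle remains; the argument is bookkeeping. The one point that needs care is the translation between the preference-theoretic monotonicity axiom and state-wise strict domination of limit orders: one must use compatibility with $u$ to identify the ranking of the relevant constant acts with the ranking of the numbers $V(\delta_{l'},y)$, and one must verify that the defining equation for $f_l$ persists when the bounded test function is replaced by the possibly unbounded Bernoulli utility $u$, which is precisely where the integrability assumption built around $d$ is used.
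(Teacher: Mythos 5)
Your proposal is correct and follows the same route as the paper: monotonicity plus compatibility with $u$ forces $l$ to be deterministically undominated, then Proposition \ref{det} upgrades this to undominated and Proposition \ref{wald} supplies the rationalizing belief. The only difference is that you spell out the first step (the paper states it in one sentence), and your elaboration, including the point about extending the defining equation of $f_l(y)$ from bounded test functions to $u$ via the dominating function $d$, is accurate.
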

\begin{proof}Since $\succeq$ is monotone and compatible with $u$, the limit order $l$ must be deterministically undominated. By Proposition \ref{det}, $l$ is undominated. By Proposition \ref{wald}, there exists $\beta\in\Delta(Y)$ such that
$l$ maximizes
 \[\begin{split}
\int V(l,y)~\mathrm d\beta(y)&=\int\int\int u\big(l(p)(x-p)\big) h(x,p,y)~\mathrm d \pi(p)~\mathrm d \xi(x)~\mathrm d\beta(y)\\
&=\int \int u(m)~\mathrm d f_l(y)(m)~\mathrm d\beta(y).
\end{split}\]
\end{proof}

Theorem \ref{newthm} is our central result. It shows that every choice of a limit order of an ambiguity-averse decision-maker might as well be explained as the decision of a subjective expected utility maximizer with the same Bernoulli utility function for a suitably chosen probabilistic belief.

\section{Discussion} \label{sec:disc}

Many models of ambiguity aversion represent uncertainty by a set $\Pi$ of probability distributions over the states of nature. For such models, one can ask whether our rationalizing probabilistic belief may need to be more extreme than every member of $\Pi$. It does not. For sets of probability distributions to be interpretable as a collection of reasonable beliefs, it should be the case that a decision-maker will prefer one act over another whenever the former gives a higher expected utility with respect to every probability distribution in $\Pi$. Such a decision-maker's preferences can be interpreted as an extension of the unanimity ordering of \citet{bewley02}. It, therefore, suffices to show that rationalizing probabilistic beliefs can be chosen to be no more extreme than every member of the set of probability distributions when preferences respect the unanimity ordering. We do so in the next proposition, which shows that behavior can then be rationalized as maximizing expected utility with respect to some belief in the closed convex hull of $\Pi$.

\begin{prop}\label{beliefs}Assume that $\Pi\subseteq\Delta(Y)$ is nonempty and weak*-closed. Let $l$ be a deterministic limit order and assume that there is no deterministic limit order $l'$ such that $\int V(l',y)~\mathrm d\beta(y)>\int V(l,y)~\mathrm d\beta(y)$ for all $\beta\in\Pi$. Then there exists a probability measure $\beta$ on $Y$ in the weak*-closed convex hull of $\Pi$ such that $l$ maximizes $\int V(\cdot,y)~\mathrm d\beta(y)$.
\end{prop}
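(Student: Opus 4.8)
The plan is to recast the hypothesis as a statement about a \emph{reduced} decision problem whose state space is $\Pi$ itself, apply Propositions~\ref{wald} and~\ref{det} to that problem, and then transport the resulting belief back to $Y$ as a barycenter. Since $\Pi$ is weak*-closed in the compact metrizable space $\Delta(Y)$, it is itself compact metrizable, so such a reduction is admissible.

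First I would build the reduced model. For $(p,x)\in\mathbb{R}^2$ and $\beta\in\Pi$ set $\tilde h(p,x,\beta)=\int_Y h(p,x,y)~\mathrm d\beta(y)$. This is nonnegative; it is measurable in $(p,x)$ for each fixed $\beta$ by Fubini and weak*-continuous in $\beta$ for each fixed $(p,x)$ because $h(p,x,\cdot)\in C(Y)$, hence jointly measurable; and it integrates to $1$ over $(p,x)$ for every $\beta$ by Tonelli. The same dominating function $d$ still works, since $\big(|u(tx-tp)|+|u(bx-bp)|\big)\tilde h(p,x,\beta)=\int_Y\big(|u(tx-tp)|+|u(bx-bp)|\big)h(p,x,y)~\mathrm d\beta(y)\le d(p,x)$. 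Thus $(\Pi,\tilde h,u,\pi,\xi,[b,t])$ is a genuine instance of the model of Section~\ref{sec:model}, with expected payoff
\[\tilde V(\mu,\beta)=\int\!\int\!\int u\big(l(p)(x-p)\big)\tilde h(p,x,\beta)~\mathrm d\pi(p)~\mathrm d\xi(x)~\mathrm d\mu(l)=\int_Y V(\mu,y)~\mathrm d\beta(y),\]
the last equality again by Fubini. I would also record that $V(\mu,\cdot)$ is continuous and bounded on $Y$, by dominated convergence using the continuity of $h$ in $Y$ and the bound $d$.

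The hypothesis of the proposition then reads exactly that $\delta_l$ is \emph{deterministically undominated} in the reduced problem: no $l'$ has $\tilde V(\delta_{l'},\beta)>\tilde V(\delta_l,\beta)$ for all $\beta\in\Pi$. Proposition~\ref{det} applies to the reduced problem---its hypothesis, that $\pi$ be atomless or $u$ weakly concave, concerns only $\pi$ and $u$ and hence is inherited (and is in force here, as in Theorem~\ref{newthm})---so $\delta_l$ is undominated in the reduced problem, and by Proposition~\ref{wald} there is $\gamma\in\Delta(\Pi)$ such that $\delta_l$ maximizes $\mu\mapsto\int_\Pi\tilde V(\mu,\beta)~\mathrm d\gamma(\beta)$. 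Finally, let $\bar\beta\in\Delta(Y)$ be the barycenter of $\gamma$, which exists because $\Delta(Y)$ is compact, convex and metrizable, is characterized by $\int_Y f~\mathrm d\bar\beta=\int_\Pi\big(\int_Y f~\mathrm d\beta\big)~\mathrm d\gamma(\beta)$ for all $f\in C(Y)$, and, being the barycenter of a measure carried by $\Pi$, lies in the weak*-closed convex hull of $\Pi$. Applying this identity with $f=V(\mu,\cdot)\in C(Y)$ gives $\int_\Pi\tilde V(\mu,\beta)~\mathrm d\gamma(\beta)=\int_Y V(\mu,y)~\mathrm d\bar\beta(y)$ for every $\mu$, so $\delta_l$---and hence $l$---maximizes $\int V(\cdot,y)~\mathrm d\bar\beta(y)$, as required.

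The main obstacle, such as it is, is the bookkeeping in the second step: checking that $(\Pi,\tilde h,u,\pi,\xi,[b,t])$ really satisfies every standing assumption of the model, especially the joint measurability of $\tilde h$ and the compact metrizability of $\Pi$, and then the standard Choquet-theoretic facts about the barycenter (existence, the $C(Y)$-representation, and membership in the closed convex hull of the support). Everything else is a direct appeal to Propositions~\ref{wald} and~\ref{det}. A more hands-on alternative would be a Sion-type minimax argument on $\Delta(L)$ against the weak*-compact set given by the weak*-closed convex hull of $\Pi$, using that $\beta\mapsto\sup_{l'}\big(\int V(l',y)~\mathrm d\beta-\int V(l,y)~\mathrm d\beta\big)$ is a supremum of weak*-continuous affine functions, hence lower semicontinuous, and so attains its minimum there; but recovering a statement valid on all of $\Delta(L)$ from the purely deterministic hypothesis seems to bring back exactly the atomless-or-weakly-concave dichotomy, so the reduction above is the cleaner path.
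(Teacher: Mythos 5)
Your proposal is correct and follows essentially the same route as the paper: the paper likewise replaces the state space $Y$ by $\Pi$, upgrades deterministic undominatedness to undominatedness by ``an argument parallel to the proof of Proposition \ref{det},'' applies the Pearce/Battigalli-et-al.\ lemma to $V^*$ restricted to $\Delta_\pi(\mathbb{R}\times A)\times\Pi$, and averages the resulting belief over $\Pi$ into an element of the closed convex hull --- exactly your reduced model plus barycenter step, just written more compactly. Your explicit remark that the ``$\pi$ atomless or $u$ weakly concave'' hypothesis must be in force is apt, since the paper's proof needs it too even though the proposition's statement omits it.
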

If the set of probability distributions on $Y$ is already closed and convex, as it usually is in the corresponding representation results, the rationalizing belief can be chosen out of the set of probability distributions itself.

At the heart of Proposition \ref{beliefs} is a generalization of Proposition \ref{wald} that holds outside our finance setting and that might be of independent interest: If $\Pi$ is nonempty and compact and a mixed act $\mu$ is maximal under the unanimity ordering, then there exists a probabilistic belief $\beta$ on $Y$ in the closed convex hull of $\Pi$ such that $\mu$ maximizes expected utility with respect to $\beta$. To prove this, one simply applies a suitable version of the lemma of \citet{pearce84} to a model in which the set of states of nature is replaced by $\Pi$. If one interprets elements of $\Pi$ as states this way, being undominated is equivalent to being maximal in the unanimity order. So for a maximal mixed act, there exists a probabilistic belief over $\Pi$ that rationalizes the choice of $\mu$. This probabilistic belief over members of $\Pi$ averages out to the desired probabilistic belief $\beta$ on $Y$ in the closed convex hull of $\Pi$. To finish the proof of Proposition \ref{beliefs}, one combines this generalization of Proposition \ref{wald} with arguments similar to those in the proof of Proposition \ref{det}. \bigskip

Our model has an implicit time-line: first the decision-maker chooses a limit order, then the price realizes, trade is executed, before, finally, the payoff realizes. Yet, it is a static model in which the decision-maker acts exactly once and before receiving any information. To model the same economic environment without limit orders one has to make the implicit dynamic explicit and also choose some dynamic version of the corresponding theory of ambiguity aversion. Most dynamic models of ambiguity aversion are consequentialist extensions of static models, see e.g., \cite{siniscalchi2011dynamic} and the references therein.\footnote{\cite{siniscalchi2011dynamic} also shows that dynamic models of ambiguity aversion necessarily have to violate either dynamic consistency, or consequentialism, or restrict the domain of preferences or allowed decision problems.}

In consequentialist models, ambiguity aversion gives rise to dynamic inconsistencies. Such a dynamically inconsistent decision-maker has a demand for commitment, with limit orders serving as commitment devices. Without commitment devices, they would engage in ex-ante strictly dominated behavior. These decision-makers would potentially renege on a chosen limit order after receiving new information, information that such decision-makers would, therefore, try to avoid. A negative value of information is natural when decision-makers have a demand for commitment, see, for example, \cite{siniscalchi_2009}. If a sophisticated decision-maker does not use a limit order, when available, they must be dynamically consistent to begin with and, therefore, behave like a subjective expected utility maximizer. These insights go beyond our model: the availability of financial instruments that can serve as commitment devices, such as limit orders in our case, can make a dramatic difference in finance models with ambiguity averse traders, even if the availability of the same financial instruments makes no difference to subjective expected utility traders.

\cite{hanany2007updating,MR2566600} provide a fundamentally different dynamic extension of static preference models of ambiguity aversion, one that maintains dynamic consistency at the price of consequentialism. Our main result still applies to this setting, of course. However, it has further implications for the counterfactual without limit orders. By dynamic consistency, such a decision-maker would behave the same way in the dynamic model with and without limit orders. The simple fact that prices are ex-ante uncertain (and given our dominance result) alone implies that a dynamically consistent decision-maker would implement their ex-ante optimal plan also after learning the price. Such a decision-maker would behave like a subject expected utility maximizer to begin with.

Either interpretation points to mechanisms that limit what ambiguity aversion can explain in finance models. Under the first interpretation one should consider the possibility that real traders have access to financial instruments that could serve as commitment devices, and this commitment matters for ambiguity averse traders. Under the second interpretation, the behavior of an ambiguity averse decision-maker is affected by how their problem is embedded in a larger context. For example, if a decision-maker faces an investment opportunity at a single given price without a prior stage, the argument of \cite{10.2307/2951685} applies, yet this decision-maker would behave like a subjective expected utility maximizer when the price that they are facing is chosen at some ex-ante level as it is in our model. \bigskip

One might worry that the dominating deterministic limit orders shown to exist in Proposition \ref{det} might be complicated measurable functions that do not correspond to anything that could be implemented in real financial markets.  This is not the case. Market participants have access to so-called \emph{stop-loss limit orders} that buy a fixed (possibly negative) amount for all prices within some interval of prices and nothing outside this interval. Positive linear combinations of such stop-loss limit orders represent, therefore, portfolios that are economically feasible. Each such portfolio corresponds to a linear combination of indicator functions of intervals. Negative weights represent stop-loss limit orders that (short) sell. We, therefore, define a \emph{simple limit order} to be a linear combination of indicator functions of intervals. These are essentially step-functions, and standard arguments for approximating general measurable functions by step-functions guarantee that we can find for each dominating deterministic limit order a dominating simple limit order.

\begin{prop}\label{simp}A mixed limit order is deterministically undominated if and only if it is not dominated by a simple limit order.
\end{prop}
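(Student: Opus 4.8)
The ``only if'' direction is immediate: every simple limit order is a deterministic limit order, so a mixed limit order that is not strictly dominated by any deterministic limit order is in particular not strictly dominated by any simple one. For the ``if'' direction I argue by contraposition, so the task reduces to showing that whenever a deterministic limit order $\delta_l$ strictly dominates a mixed limit order $\mu$, some simple limit order strictly dominates $\mu$ as well. The plan is to approximate $l$ by simple limit orders in the topology of convergence in measure and to verify that this approximation preserves the strict domination uniformly over the compact state space $Y$.

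The decisive step is the joint continuity of the payoff functional $(l,y)\mapsto V(l,y)$ on $L\times Y$. To see it, take $l_n\to l$ in measure and $y_n\to y$ in $Y$; passing to an arbitrary subsequence one may extract a further subsequence along which $l_n\to l$ pointwise $\pi$-almost everywhere. Since every $l_n$ takes values in $[b,t]$ and $u$ is increasing, $u\big(l_n(p)(x-p)\big)$ lies between $u(bx-bp)$ and $u(tx-tp)$, so the integrability hypothesis yields $\big|u\big(l_n(p)(x-p)\big)\big|\,h(p,x,y_n)\le d(p,x)$ for all $n$, while by continuity of $u$ and of $h$ in its last argument the integrand converges $\pi\otimes\xi$-almost everywhere to $u\big(l(p)(x-p)\big)\,h(p,x,y)$; dominated convergence then gives $V(l_n,y_n)\to V(l,y)$ along the subsequence, and since the subsequence was arbitrary the whole sequence converges. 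Because $Y$ is compact, joint continuity implies that $l\mapsto V(l,\cdot)$ is continuous from $L$ into $C(Y)$ with the supremum norm; the same dominated convergence argument, now using the uniform bound $\int\int d(p,x)\,\mathrm d\pi(p)\,\mathrm d\xi(x)$ and integrating against $\mu$, shows that $y\mapsto V(\mu,y)$ is continuous as well.

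The second ingredient is that simple limit orders are dense in $L$ in the topology of convergence in measure. Fixing a probability measure mutually absolutely continuous with $\pi$, one first approximates an arbitrary $[b,t]$-valued measurable function by $[b,t]$-valued simple functions, and then, approximating each of its finitely many Borel level sets by a finite union of intervals up to small symmetric difference, obtains $[b,t]$-valued step functions whose level sets are intervals; these are simple limit orders. This is exactly the standard argument for approximating general measurable functions by step-functions alluded to in the discussion above.

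Combining the two ingredients finishes the proof. If $\delta_l$ strictly dominates $\mu$, then $y\mapsto V(l,y)-V(\mu,y)$ is continuous and strictly positive on the compact set $Y$, hence bounded below by some $\varepsilon>0$. Choosing, by the denseness above and the continuity of $l\mapsto V(l,\cdot)$ into $C(Y)$, a simple limit order $s$ with $\sup_{y\in Y}\big|V(s,y)-V(l,y)\big|<\varepsilon$, we obtain $V(s,y)>V(l,y)-\varepsilon\ge V(\mu,y)$ for every $y\in Y$, so $s$ strictly dominates $\mu$. I expect the main obstacle to be the joint continuity of $V$, that is, converting the per-state dominated convergence into the uniform-in-$y$ control that compactness of $Y$ then exploits; the denseness of simple limit orders is routine, and the only bookkeeping point is to keep the approximating step functions valued in $[b,t]$ so that they genuinely are limit orders.
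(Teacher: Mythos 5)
Your proof is correct and follows essentially the same route as the paper's: reduce the claim to the density of simple limit orders in the topology of convergence in measure, and combine this with continuity of the payoff functional and compactness of $Y$ to pass strict domination from a deterministic dominating order to a nearby simple one. The only differences are cosmetic --- the paper imports the continuity step from the proof of Proposition \ref{det} and establishes density via Ulam's theorem plus the Tietze extension theorem, whereas you prove joint continuity directly by dominated convergence and density by approximating level sets with finite unions of intervals; both are standard, and your closing caveat about keeping the approximating step functions valued in $[b,t]$ is exactly the right bookkeeping point to flag.
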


Though we have shown that every undominated limit order must maximize expected utility for some belief, we have not shown that the maximization problem of a subjective expected utility maximizer has a solution. It does. Though the space of limit orders is not compact, the space of ``behavioral limit orders'' we have used is a compact convex set in a suitably chosen locally convex Hausdorff topological vector space, and, for a given belief, the function that maps a behavioral limit order to its expected utility is linear and continuous (compare the proof of Proposition \ref{wald}). Such a continuous linear function on a nonempty compact convex set must take its maximum on an extreme point, on a behavioral limit order that cannot be written as a proper mixture of two other behavioral limit orders. By \citet[Theorem 6.1]{balder1981mathematical} or the results in \citet[Section I]{MR658713}, such an extreme point must be a nonrandomized limit order.

Finally, the formalism we use deserves some discussion. For the proof of Lemma \ref{wald}, we need expected payoffs to be jointly continuous in mixed limit orders and beliefs over $Y$. Without some restriction on the dependence between the distributions of prices and payoffs, this is generally not possible. The approach we have taken is inspired by the existence results for Bayesian games of \citet{MR812820} and, in particular, \citet{MR942618}, whose framework and results we rely on. That the joint distribution of prices and payoffs is absolutely continuous with respect to a product measure corresponds to a diffuseness condition on types in Bayesian games. \citet{MR2888834} discusses discontinuities that can arise without such an assumption. Given the mathematical machinery, it is straightforward to modify the setting so that Bernoulli utility functions are defined only for positive wealth levels, so that state-dependent payoffs are allowed for, or so that the decision-maker can choose more than one asset.

For the sake of simplicity, we have implicitly assumed that the amount that the decision-maker is allowed to buy or sell of the asset does not depend on the price. If we interpret these bounds as debt and short-selling constraints, we should allow for dependence on the price. This could easily be done in our framework. Take now $\mathcal{A}:\mathbb{R}\to 2^\mathbb{R}$ to be a measurable correspondence with nonempty and compact values, and interpret $\mathcal{A}(p)$ as the set of possible net-amounts of the asset that can be bought at price $p$. By \citet[Corollary 1.2]{MR534114}, there exists a compact metrizable space $A$ and a measurable function $c:A\times\mathbb{R}\to\mathbb{R}$ that is continuous in $A$ such that $\mathcal{A}(p)=\{c(a,p)\mid a\in A\}$ for all $p\in\mathbb{R}$. We can then replace $u\big(a(x-p)\big)$ by $u\big(c(a,p)(x-p)\big)$ and proceed as before, provided we change the assumption guaranteeing expected utilities to be finite and well-behaved to
\[\max_{a\in\mathcal{A}(p)}|u\big(c(a,p)(x-p)\big)|h(p,x,y)\leq d(p,x)\]
for all $(p,x)\in\mathbb{R}^2$ and all $y\in Y$ for a suitably chosen integrable function $d:\mathbb{R}\times\mathbb{R}\to \mathbb{R}$.

In general, there are problems with randomizing over measurable functions, as identified by \citet{MR0140636}.\footnote{A simpler proof of Aumann's result has been given by \citet{MR315076}. For a textbook treatment of Rao's proof, see \citet[Section 5.2]{MR3445285}.} The problem Aumann identifies is that the pointwise evaluation of measurable functions is, in general, not a jointly measurable mapping, no matter the $\sigma$-algebra one puts on the space of measurable functions. But limit orders are really equivalence classes of measurable functions that are not evaluated pointwise but by integration. This problem, therefore, does not affect our arguments.

\section{Proofs}\label{sec:proofs}

In what follows, we can assume without loss of generality that $\pi$ and $\xi$ are probability measures and not general $\sigma$-finite measures. The assumptions will still hold. To see this, start with some nontrivial $\sigma$-finite measures $\pi'$ and $\xi'$ and let $\pi$ and $\xi$ be probability measures such that $\pi'$ is absolutely continuous with respect to $\pi$ and such that $\xi$ is absolutely continuous with respect to $\xi'$. Let $r_\pi$ be a nonnegative Radon-Nikodym derivative of $\pi'$ with respect to $\pi$ and $r_\xi$ be a nonnegative Radon-Nikodym derivative of $\xi'$ with respect to $\xi$. Let $d':\mathbb{R}\times\mathbb{R}\to\mathbb{R}$ be given by $d'(p,x)=d(p,x)r_\pi(p)r_\xi(x)$ and let $h':\mathbb{R}\times\mathbb{R}\times Y\to\mathbb{R}$ be given by $h'(p,x,y)=h(p,x,y)r_\pi(p)r_\xi(x)$. Then, by Fubini's theorem,
\[\begin{split}\int\int d(p,x)~\mathrm d \pi'(p)~\mathrm d \xi'(x)&=\int d(p,x)~\mathrm d\pi'\otimes\xi'(p,x)\\
&= \int r_\pi(p)\int d(p,x) r_\xi(x)~\mathrm d\xi(x)~\mathrm d\pi(p)\\
&=\int r_\pi(p)r_\xi(p)d(p,x)~\mathrm d\pi\otimes\xi\\
&=\int d'~\mathrm d\pi\otimes\xi(p,x),\end{split}\]
so $d'$ is $\pi\otimes\xi$-integrable. Also, one can show by a similar argument that $\int h'(p,x,y)~\mathrm d \pi\otimes\xi(p,x)=\int h(p,x,y)~\mathrm d\lambda\otimes\lambda(p,x)$ for all $y\in Y$. Finally, by multiplying both sides the original uniform integrability inequality by $r_\pi(p)r_\xi(x)$, we obtain
\[|u(tp-tx)|h'(p,x,y)+|u(bx-bp)|h'(p,x,y)\leq d'(p,x)\]
for all $(p,x)\in\mathbb{R}^2$ and all $y\in Y$. So we can assume without loss of generality that our assumptions hold for the product of two probability measures.

\begin{proof}[Proof of Proposition \ref{wald}]
Let $\Delta_\pi(\mathbb{R}\times A)$ be the space of Borel probability measures on $\mathbb{R}\times A$ with $\mathbb{R}$-marginal $\pi$.
For $B\subseteq\mathbb{R}\times A$, let $1_B:\mathbb{R}\times A\to\{0,1\}$ be the corresponding indicator function. We define $\phi:\Delta(L)\to\Delta_\pi(\mathbb{R}\times A)$ by
\[\phi_\mu(B)=\int \int 1_B(p,l(p))~\mathrm d\pi(p)~\mathrm d\mu(l)\]
for each Borel set $B\subseteq\mathbb{R}\times A$.  It follows from \citet[Theorem 7.1]{balder1981mathematical} or the results in \citet[Section I]{MR658713} that $\phi$ is a surjection. Moreover,
 \[\begin{split}
V(\mu,y)&=\int\int u\big(l(p)(x-p)\big) h(p,x,y)\mathrm~ d\pi\otimes\xi(p,x)~\mathrm d\mu(l) \\
 &=\int\int\in tu\big(l(p)(x-p)\big) h(p,x,y)~\mathrm d\pi(p)~\mathrm d\mu(l)~\mathrm d\xi(x)\\
&=\int\int u\big(a(x-p)\big) h(p,x,y)~\mathrm d\phi_\mu(p,a)~\mathrm d\xi(x),
\end{split}\]
so we can study undominated mixed limit orders in terms of $\Delta_\pi(\mathbb{R}\times A)$. We can identify $\Delta_\pi(\mathbb{R}\times A)$ with a convex  and compact subset of a locally convex Hausdorff topological vector space as in \citet{MR942618} by endowing $\Delta_\pi(\mathbb{R}\times A)$ with the narrow topology on Young measures. It follows from the Scorza-Dragoni Theorem, see, for example, \citet*[Theorem 2.5.19]{MR2024162}, that this topology coincides with the usual topology of weak convergence of measures.

We can then, abusing notation a bit, treat $V$ as a continuous function $V:\Delta_\pi(\mathbb{R}\times A)\times Y\to\mathbb{R}$. We can also identify $Y$ homeomorphically with a closed subset of the weak*-compact set $\Delta(Y)$ via the embedding $y\mapsto\delta_y$. We then extend $V$ to a bilinear function $V^*:\Delta_\pi(\mathbb{R}\times A)\times\Delta(Y)\to\mathbb{R}$ via integration. The function $V^*$ is continuous by \citet[Theorem 2.5]{MR942618}. By \citet[Proposition 1.2]{MR1835574}, $\Delta(Y)$ is, under the embedding, the closed convex hull of $Y$.
So by \citet*[Lemma 1]{battigalli16n}, an element $\tau$ of $\Delta_\pi(\mathbb{R}\times A)$ is undominated if and only if $\tau\in\argmax V^*(\cdot,\beta)$ for some $\beta\in \Delta(Y)$.
\end{proof}

\begin{proof}[Proof of Proposition \ref{det}]
One direction is trivial in both cases. For the other direction, we first work with the assumption that $\pi$ is atomless. Assume that $\mu'$ strictly dominates $\mu$. By the Berge maximum theorem, the function $\kappa\mapsto\min_y V(\kappa,y)-V(\mu,y)$ is continuous. By assumption, it achieves a strictly positive value at $\mu'$. To finish the proof, we make use of the fact that the set of deterministic limit orders, embedded via the function $l\mapsto\phi_{\delta_l}$, is dense in $\Delta_\pi(\mathbb{R}\times A)$ when $\pi$ is nonatomic. This denseness is familiar from the optimal control literature, the classic reference being \citet[Theorem IV.2.6; 6]{MR0372708}. By this denseness, there exists a deterministic limit order $l$ such that $\min_y V(l,y)-V(\mu,y)>0$. Then $\mu$ is dominated by the deterministic limit order $l$.

Next, we assume that $u$ is weakly concave. It suffices to show that for each $\tau\in \Delta_\pi(\mathbb{R}\times A)$ there exists deterministic $\tau'\in \Delta_\pi(\mathbb{R}\times A)$ such that $V(\tau',y)\geq V(\tau,y)$ for all $y\in Y$. By the existence theorem for regular condition probabilities, \citet[Theorem 10.2.2.]{MR1932358}, $\tau$ corresponds to a modulo $\pi$-null sets unique measurable function $\kappa:\mathbb{R}\to  \Delta(A)$. Define $\kappa':\mathbb{R}\to\mathbb{R}$ by $\kappa'(p)=\int x~\mathrm d\kappa(p)$. We can treat $\kappa'$ as a degenerate (deterministic) regular condition probability that integrates back to some $\tau'\in \Delta_\pi(\mathbb{R}\times A)$. In the following, the inequality follows from Jensen's inequality; \citet[Theorem 10.2.6.]{MR1932358}. We have for all $y\in Y$
 \[\begin{split}
V(\tau,y)&=\int\int u\big(a(x-p)\big) h(p,x,y)~\mathrm d\tau(p,a)~\mathrm d\xi(x) \\
 &=\int\int\int u\big(a(x-p)\big) h(p,x,y)~\mathrm d\kappa_p(a)~\mathrm d\pi(p)~\mathrm d\xi(x) \\
&=\int\int\int u\big(a(x-p)\big)~\mathrm d\kappa_p(a) h(p,x,y)~\mathrm d\pi(p)~\mathrm d\xi(x) \\
&\leq\int\int u\big(\kappa'(p)(x-p)\big) h(p,x,y)~\mathrm d\pi(p)~\mathrm d\xi(x) \\
&=\int\int u\big(a(x-p)\big) h(p,x,y)~\mathrm d\tau'(p,a)~\mathrm d\xi(x).\\
&=V(\tau',y).
\end{split}\]
It follows that every mixed limit order dominated by $\tau$ must already be dominated by the deterministic limit order $\tau'$.
\end{proof}

\begin{proof}[Proof of Proposition \ref{beliefs}] Let $V^*:\Delta_\pi(\mathbb{R}\times A)\times\Delta(Y)\to\mathbb{R}$ be the bilinear continuous function introduced in the proof of Proposition \ref{wald}. An argument parallel to the proof of Proposition \ref{det} shows that there being no deterministic limit order $l$ such that $\int V(l',y)~\mathrm d\beta(y)>\int V(l,y)~\mathrm d\beta(y)$ for all $\beta\in\Pi$ implies that there is no $\mu\in\Delta_\pi(\mathbb{R}\times A)$ such that $V^*(\mu,\beta)>V^*(l,\beta)$ for all $\beta\in\Pi$. Applying \citet*[Lemma 1]{battigalli16n} to the restriction
$V^*:\Delta_\pi(\mathbb{R}\times A)\times\Pi\to\mathbb{R}$, we obtain a probability measure $\beta$ on $Y$ in the closed convex hull of $\Pi$ such that $l$ maximizes $V^*(\cdot,\beta)=\int V(\cdot,y)~\mathrm d\beta(y)$.
\end{proof}

\begin{proof}[Proof of Proposition \ref{simp}] As one sees from the proof of Proposition \ref{det}, it suffices to prove that the family of simple limit orders is dense in the space of  limit orders in the topology of convergence in measure. We can metrize the topology of convergence in measure by the \emph{Ky Fan metric} $\alpha$ given by $\alpha(l,l')=\inf\{\epsilon>0\mid \pi(|l-l'|>\epsilon)\leq\epsilon\}$; see \citet[Theorem 9.2.2.]{MR1932358}.
By a standard argument, one can approximate each limit order arbitrarily well by a simple function $f=\sum_{i=1}^m \lambda_i 1_{A_i}$ with the $A_i$ disjoint. Since each $A_i$ can be approximated from below by compact sets by Ulam's theorem, \citet[Theorem 7.1.4.]{MR1932358}, one can approximate $f$ arbitrarily well by a simple function $f'=\sum_{i=1}^m\lambda_i 1_{A_i'}$ with each $A_i'$ a compact subset of $A_i$. The function $f'$ is continuous on the compact set $\bigcup_{i=1}^m A_i'$ and has, by the Tietze extension theorem, \citet[Theorem 2.6.4.]{MR1932358}, a continuous extension to all of $\mathbb{R}$ whose range is contained in the convex hull of the range of $f'$. This way, one can approximate each limit order arbitrarily well by a continuous function. Clearly, one can approximate a continuous function arbitrarily well uniformly by a step function and thus a simple limit order on a compact set whose complement has arbitrarily small measure. The form of the Ky-Fan metric shows that this gives us the desired approximation.
\end{proof}

\section*{Appendix: Large State Spaces in the Motivating Example}\label{godhatesyou}

Our theory makes certain assumptions on the joint distributions of prices and final values. In particular, they must admit certain, well-behaved densities. This makes it impossible to model a setting in which an investor is minmaxed conditional on prices. The assumptions that rule out this possibility are made for certain arguments familiar from finite state-space models to still go through. But they are not essential to the conclusion. If we allow for richer models of the joint distribution of prices and final values, we can manually construct beliefs for a subjective expected utility maximizer that rationalize any behavior an ambiguity averse decision-maker might exhibit. But now, such behavior may include abstention for some prices.

Let $\pi$ be a Borel distribution on $\mathbb{R}$ that has full support and no mass points. We take $\pi$ to be the fixed distribution of prices. We then take the state space $Y$ to be $\Delta_\pi\big(\mathbb{R}\times\{-1,1\}\big)$, the space of Borel probability distributions on $\mathbb{R}\times\{-1,1\}$ with $\mathbb{R}$-marginal $\pi$. Limit orders are equivalence classes of measurable functions $l:\mathbb{R}\to\{-1,0,1\}$, equivalence meaning agreeing outside $\pi$-null sets. The limit order specifies the net amount being bought for each price. The expected payoff for a risk-neutral decision-maker choosing the limit order $l$ in state $y$ is
\[\int l(p)\cdot (x-p)~\mathrm d y(p,x).\]
Irrespective of the state, it is uniquely optimal to buy at prices smaller than $-1$ and to sell at prices larger than $1$. For every value between $-1$ and $1$, a minmax expected utility maximizer may strictly prefer to abstain. For every limit order $l$ that buys or sells for such prices, there exists a function $f$ from prices to $\{-1,1\}$ such that $f(p)=-l(p)$ for all $p$ such that $l(p)\neq 0$. For this function $f$, there is a unique state $y_f$, a probability distribution on $\mathbb{R}\times\{-1,1\}$ with $\mathbb{R}$-marginal $\pi$, that assigns probability one to the graph of $f$. In this state, the limit order $l$ does worse than the limit order that abstains from trading between $-1$ and $1$. So, with a sufficiently rich set of priors, a risk-neutral minmax expected utility maximizer will abstain. However, with such a rich state space, even risk-neutral subjective expected utility maximizers will also abstain from trading for prices between $-1$ and $1$ for some belief. To see this, we take the states in the support of the rationalizing belief to be the distributions corresponding to cutoff functions $f_z:\mathbb{R}\to\{-1,1\}$ given by $f_z(p)=1$ for $p \geq z$ and $f_z(p)=-1$ for $p < z$. Let $\mu$ be a distribution over such functions such that the corresponding cutoffs $z$ are uniformly chosen in $[-1,1]$ and let $F$ be the corresponding cumulative distribution function, given by $F(p)=(1+p)/2$ for $p$ between $-1$ and $1$. Then, the conditional expected final value of the asset at each price $p\in [-1,1]$ equals $p$, so that not trading at such prices is optimal for a risk-neutral subjective expected utility maximizer with belief $\mu$. To see this, note that the asset's conditional expected final value at a price $p\in (-1,1)$ is
\[\mu\big(f_z\mid z\leq p\big)\cdot 1\, +\, \mu\big(f_z\mid z> p\big)\cdot(-1)\]
\[=F(p)\cdot 1 + \big(1-F(p)\big)\cdot(-1)=p.\]

\bibliographystyle{plainnat}
\bibliography{References}

\end{document}